\newcommand\vhdef2
\newcommand{\class}[1]{{\ifnum\vhdef=2\mathbf{#1}\else\mathrm{#1}\fi}}
\newcommand{\co}{
\ifnum\vhdef=2\mathbf{co\,}\else\mathrm{co\hspace{2pt}}\fi
\ifnum\vhdef=2\textbf{-}\else\textrm{-}\fi
}
\newcommand{\NP}{\class{NP}}
\newcommand{\coNP}{\class{{\co}NP}}
\newcommand{\substd}{16d\log^2_*n}
\newcommand{\substf}{17d\log_*^2n}
\newcommand{\substk}{12 d' \ln (16 d\log_*^2n)} 
\newcommand{\substl}{(3.03\cdot 10^4)\cdot f\cdot\ln (16kd\log_*^2n)} 
\newcommand{\errfhigh}{\frac1{0.99f}}
\newcommand{\errhighsubst}{\frac1{16d\log_*^2n}} 
\newcommand{\inverrlowsubst}{18d\log_*^2n}
\newcommand{\errlowsubst}{\frac1{\inverrlowsubst}}
\newcommand{\testerrflowproved}{\frac1{1.01f}}
\newcommand{\testerrflowused}{\frac1{1.011f}}
\newcommand{\testerrrej}{e^{-\frac{l}{3.03\cdot 10^4\cdot f}}}
\newcommand{\testerracc}{e^{-\frac{l}{2\cdot 10^4\cdot f}}}
\newcommand{\certerrrej}{e^{-\frac{k}{12d'}} + k\cdot \testerrrej}
\newcommand{\certerracc}{e^{-\frac{k}{8d'}}  + k\cdot \testerracc}
\newcommand{\certerrrejsubst}{\frac1{8 d\log_*^2n}} 
\newcommand{\certerraccsubst}{\frac1{8 d\log_*^2n}} 
\newcommand{\ournote}[1]{} 
\newcommand{\ourcomment}[1]{} 
\begin{document}
\title[On optimal heuristic randomized semidecision procedures]
{On optimal heuristic randomized semidecision procedures,
with application to proof complexity}
\author{E. A. Hirsch}{Edward A. Hirsch}
\author{D. Itsykson}{Dmitry Itsykson}
\address{Steklov Institute of Mathematics at St. Petersburg,
   \newline 27 Fontanka, St.Petersburg, 191023, Russia}  
\urladdr{http://logic.pdmi.ras.ru/\~{}hirsch}  
\urladdr{http://logic.pdmi.ras.ru/\~{}dmitrits}

\thanks{
Partially supported by
grants RFBR 08-01-00640 and 09-01-12066,
and the president of Russia grant ``Leading Scientific Schools'' NSh-4392.2008.1,
by Federal Target Programme ``Scientific and
scientific-pedagogical personnel of the innovative Russia'' 2009-2013
(contract N $\Pi$265 from 23.07.2009). The second author is also supported by
Russian Science Support Foundation.
}
%


\keywords{propositional proof complexity, optimal algorithm}
\subjclass{F.2}

\begin{abstract}
The existence of a ($p$-)optimal propositional proof system
is a major open question in (proof) complexity;
many people conjecture that such systems do not exist.
Kraj\'{\i}\v{c}ek and Pudl\'{a}k \cite{KP}
show that this question is equivalent
to the existence of an algorithm that is
optimal\footnote{Recent papers \cite{Monroe}
call such algorithms \emph{$p$-optimal} while
traditionally Levin's algorithm was called \emph{optimal}.
We follow the older tradition. Also there is some
mess in terminology here, thus please see formal definitions
in Sect.~\ref{sec:prelim} below.}
on all propositional tautologies.
Monroe \cite{Monroe} recently gave a conjecture 
implying that such algorithm does not exist.

We show that in the presence of errors such optimal 
algorithms \emph{do} exist. The concept is motivated
by the notion of heuristic algorithms.
Namely, we allow the algorithm to claim
a small number of false ``theorems''
(according to any polynomial-time samplable distribution on non-tautologies)
and err with bounded probability on other inputs.

Our result can also be viewed as the existence of an optimal
proof system in a class of proof systems obtained by
generalizing automatizable proof systems.

\bigskip
\end{abstract}

\maketitle

\section{Introduction}\label{sec:intro}
Given a specific problem, does there exist the ``fastest''
algorithm for it? Does there exist a proof system
possessing the ``shortest'' proofs of the positive
solutions to the problem? 
Although the first result in this direction 
was obtained by Levin \cite{Levin-optimal} in 1970s,
these important questions are still open for most interesting languages,
for example, the language of propositional tautologies.

\paragraph{Classical version of the problem.}
According to Cook and Reckhow \cite{CR79}, a proof system
is a polynomial-time mapping of all strings (``proofs'') onto ``theorems''
(elements of certain language $L$; if $L$ is the language of
all propositional tautologies, the system is called a \emph{propositional}
proof system).
The existence of a \emph{polynomially bounded}
propositional proof system (that is, a system that has a
polynomial-size proof for every tautology) is equivalent to $\NP=\coNP$.
In the context of polynomial boundedness a proof system can be 
equivalently viewed as a function that given a formula and a ``proof'',
verifies in polynomial time that a formula is a tautology:
it must accept at least one ``proof'' for each tautology (\emph{completeness})
and reject all proofs for non-tautologies (\emph{soundness}).

One proof system $\Pi_w$ is \emph{simulated} by another one $\Pi_s$ if the shortest proofs
for every tautology in $\Pi_s$ are at most polynomially
longer than the shortest proofs in $\Pi_w$.
The notion of \emph{$p$-simulation} is similar, but requires also a polynomial-time
computable function for translating the proofs from $\Pi_w$ to $\Pi_s$.
A \emph{($p$-)optimal} propositional proof system is one that ($p$-)simulates all other
propositional proof systems. 

The existence of an optimal (or $p$-optimal) propositional
proof system is a major open question.
If one would exist, it would allow to reduce
the $\NP$ vs $\coNP$ question to proving proof size bounds
for just one proof system.
It would also imply the existence of a complete disjoint $\mathbf{NP}$ pair
\cite{Razborov,Pudlak}. 
Kraj\'{\i}\v{c}ek and Pudl\'{a}k \cite{KP}
show that the existence of a $p$-optimal system is equivalent
to the existence of an algorithm 
that is optimal on all propositional tautologies,
namely, it always solves the problem correctly
and it takes for it at most polynomially longer
to stop on every tautology than for any other correct algorithm
\emph{on the same tautology}.
Monroe \cite{Monroe} recently gave a conjecture 
implying that such algorithm does not exist.
Note that Levin \cite{Levin-optimal} showed
that an optimal algorithm does exist for finding witnesses
to non-tautologies; however, (1) its behaviour
on tautologies is not restricted; (2) after translating
to the decision problem by self-reducibility
the running time in the optimality condition is compared
to the running time for \emph{all shorter formulas as well}.

An \emph{automatizable} proof system is one that has
an automatization procedure that given a tautology,
outputs its proof of length polynomially bounded by
the length of the shortest proof
in time bounded by a polynomial in the output length.
The automatizability of a proof system $\Pi$ implies
polynomial separability of its canonical $\mathbf{NP}$ pair \cite{Pudlak},
and the latter implies the automatizability
of a system that $p$-simulates $\Pi$.
This, however, does not imply the existence
of ($p$-)optimal propositional proof systems
in the class of automatizable proof systems.
To the best of our knowledge, no such
system is known to the date.

\paragraph{Proving propositional tautologies heuristically.}
An obvious obstacle to constructing an optimal proof system by
enumeration is that no efficient procedure is known for enumerating
the set of all complete and sound proof systems.
Recently a number of papers overcome similar obstacles in 
other settings by considering either
computations with non-uniform advice
(see \cite{Fortnow-survey} for survey)
or
\emph{heuristic} algorithms
\cite{FS-old,Pervyshev-heuristic,Itsykson}.
In particular, optimal propositional proof systems with advice
do exist \cite{CK}. 
We try to follow the approach of heuristic computations to
obtain a ``heuristic'' proof system.
While our work is motivated by propositional proof complexity,
i.e., proof systems for the set of propositional tautologies,
our results apply to proof systems for any recursively
enumerable language.

We introduce a notion of a \emph{randomized heuristic automatizer}
(a randomized semidecision procedure that may have false positives) 
and a corresponding notion of a \emph{simulation}.
Its particular case, a deterministic automatizer
(making no errors) for language $L$,
along with deterministic simulations,
can be viewed in two ways:
\begin{itemize}
\item as an automatizable proof system for $L$
(note that such proof system can be identified
with its automatization procedure;
however, it may not be the case for randomized algorithms,
whose running time may depend on the random coins),
where simulations are $p$-simulations of proof systems;
\item as an algorithm for $L$,
where simulations are simulations of algorithms for $L$ in the sense of \cite{KP}.
\end{itemize}

Given $x\in L$, an automatizer must return $1$ and stop.
The question (handled by simulations) is how fast it does the job.
For $x\notin L$, the running time does \emph{not} matter.
Given $x\notin L$, a deterministic automatizer simply
must \emph{not} return $1$. A randomized heuristic automatizer
may erroneously return $1$; however, for ``most'' inputs it may do it
only with bounded probability (``good'' inputs).
The precise notion of ``most''
inputs is: given an integer parameter $d$ and a sampler
for $\overline{L}$, ``bad'' inputs must have probability
less than $1/d$ according to the sampler.
The parameter $d$ is handled by simulations in the way
such that no automatizer can stop in time
polynomial in $d$ and the length of input
unless an optimal automatizer can do that.

In Sect.~\ref{sec:prelim} we give precise definitions.
In Sect.~\ref{sec:optimal} we construct an optimal randomized heuristic automatizer.
In Sect.~\ref{sec:systems} we give a notion of heuristic
probabilistic proof system and discuss the relation of automatizers
to such proof systems.


\section{Preliminaries}\label{sec:prelim}

\subsection{Distributional proving problems}
In this paper we consider algorithms
and proof systems \emph{that allow small errors},
i.e., claim a small amount of wrong theorems.
Formally, we have a probability distribution
concentrated on non-theorems
and require that the probability of sampling a non-theorem
accepted by an algorithm or validated by the system is small.
\begin{definition}
We call a pair $(D,L)$ a \emph{distributional proving problem}
if $D$ is a collection of probability distributions
$D_n$ concentrated on $\overline{L}\cap\{0,1\}^n$.
\end{definition}

In what follows we write $\Pr_{x\gets D_n}$ to denote
the probability taken over $x$ from such distribution,
while $\Pr_A$ denotes the probability taken
over internal random coins used by algorithm $A$.

\subsection{Automatizers}

\begin{definition}\label{def:aut}
A \emph{$(\lambda,\epsilon)$-correct automatizer} for 
distributional proving problem $(D,L)$ is a randomized algorithm $A$
with two parameters $x\in\{0,1\}^*$ and $d\in\mathbb{N}$
that satisfies the following conditions:
\begin{enumerate}
\item\label{def:aut:diverge} $A$ either outputs $1$ 
      (denoted $A(\ldots)=1$)
      or does not halt at all (denoted $A(\ldots)=\infty$);
\item\label{def:aut:comp} For every $x\in L$ and $d\in\mathbb{N}$, 
      $A(x,d)=1$. 
\item\label{def:aut:corr} For every $n,d\in\mathbb{N}$,
      $$\Pr_{r\gets D_n}\left\{\Pr_A\{A(r,d)=1\}>\epsilon\right\}<\frac1{\lambda d}.$$
\end{enumerate}
Here $\lambda>0$ is a constant and $\epsilon>0$ may depend on the first input ($x$) length.
An \emph{automatizer} is a $(1,\frac14)$-correct automatizer.
\end{definition}
\begin{remark}
For recursively enumerable $L$, conditions~\ref{def:aut:diverge} and~\ref{def:aut:comp} can be easily enforced at the cost of a slight overhead in time by running $L$'s semidecision procedure in parallel.
\end{remark}

In what follows, all automatizers are for the same problem $(D,L)$.

\begin{definition}
The \emph{time} spent by automatizer $A$ on input $(x,d)$
is defined as the median time
$$
t_A(x,d)=
\min\left\{t\in\mathbb{N}\;\bigg|\;\Pr_A\{\text{$A(x,d)$ stops in time at most $t$}\}\ge\frac12\right\}.$$
We will also use a similar notation for ``probability $p$ time'':
$$
t^{(p)}_A(x,d)=
\min\left\{t\in\mathbb{N}\;\bigg|\;\Pr_A\{\text{$A(x,d)$ stops in time at most $t$}\}\ge p\right\}.$$
\end{definition}

\begin{definition}
Automatizer $S$ simulates automatizer $W$ if
there are polynomials $p$ and $q$ such that
for every $x\in L$ and $d\in\mathbb{N}$,
$$
{t_S(x,d)}
\le \max_{d'\le q(d\cdot|x|)} p(t_W(x,d')\cdot|x|\cdot d).
$$
\end{definition}

\begin{definition}
An \emph{optimal} automatizer is one that simulates
every other automatizer.
\end{definition}

\begin{definition}
Automatizer $A$ is \emph{polynomially bounded}
if there is a polynomial $p$ such that
for every $x\in L$ and every $d\in\mathbb{N}$,
$$
t_A(x,d) \le p(d\cdot|x|).
$$
\end{definition}

The following proposition follows directly from the definitions.
\begin{proposition}\label{prop:sim+opt}\hfill
\begin{enumerate}
\item
If $W$ is polynomially bounded
and is simulated by $S$, then $S$ is polynomially bounded too.
\item
An optimal automatizer is not polynomially bounded
if and only if no automatizer is polynomially bounded.
\end{enumerate}
\end{proposition}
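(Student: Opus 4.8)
The plan is to simply unwind the three definitions involved; both parts are pure bookkeeping. Throughout I would first normalize all polynomials appearing in the hypotheses to be monotone nondecreasing with nonnegative coefficients (every polynomial is dominated by such a one), so that substituting one into another again yields a monotone polynomial and the inequalities below are preserved under these substitutions.

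For the first part, fix a polynomial $p_W$ witnessing that $W$ is polynomially bounded, so that $t_W(x,d)\le p_W(d\cdot|x|)$ for all $x\in L$ and $d\in\mathbb{N}$, and fix polynomials $p,q$ witnessing that $S$ simulates $W$. Then for every $x\in L$ and $d\in\mathbb{N}$, each index $d'\le q(d\cdot|x|)$ occurring in the maximum in the simulation inequality satisfies $t_W(x,d')\le p_W(q(d\cdot|x|)\cdot|x|)$, whence
$$
t_S(x,d)\le\max_{d'\le q(d\cdot|x|)}p\big(t_W(x,d')\cdot|x|\cdot d\big)\le p\big(p_W(q(d\cdot|x|)\cdot|x|)\cdot|x|\cdot d\big).
$$
The right-hand side is a polynomial in $d\cdot|x|$, so $S$ is polynomially bounded.

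For the second part, one direction is immediate: an optimal automatizer is in particular an automatizer, so if no automatizer is polynomially bounded then neither is it. For the converse I would argue by contraposition. Suppose some automatizer $W$ is polynomially bounded, and let $O$ be an optimal automatizer. By optimality $O$ simulates $W$, and then the first part yields that $O$ is polynomially bounded. Hence if $O$ is not polynomially bounded, no automatizer is, which is what we wanted.

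I expect no genuine obstacle here; the only steps worth a sentence of care are the monotonicity normalization of the polynomials occurring in the simulation hypothesis and the remark that a composition of monotone polynomials is again a polynomial, which is precisely what makes the displayed bound on $t_S(x,d)$ in the first part truly polynomial in $d\cdot|x|$.
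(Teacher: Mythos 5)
Your proof is correct and matches what the paper intends: the paper gives no explicit proof, stating only that the proposition ``follows directly from the definitions,'' and your argument is exactly that routine unwinding (bounding $t_W(x,d')$ via monotone polynomial majorants inside the simulation inequality for part 1, then combining optimality with part 1 for part 2).
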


\section{Optimal automatizer}\label{sec:optimal}

The optimal automatizer that we construct
runs all automatizers in parallel
and stops when the first of them stops
(recall Levin's optimal algorithm
for SAT \cite{Levin-optimal}).
A major obstacle to this simple plan is
the fact that it is unclear how to
enumerate all automatizers efficiently
(put another way, how to check whether
a given algorithm is a correct automatizer).
The plan of overcoming this obstacle
(similar to constructing a complete public-key cryptosystem 
\cite{Harnik-et-al} (see also \cite{GHP-complete}))
is as follows:
\begin{itemize}
\item Prove that w.l.o.g. a correct automatizer
is very good: in particular, amplify its probability of success.
\item Devise a ``certification'' procedure that
distinguishes very good automatizers from incorrect
automatizers with overwhelming probability.
\item Run all automatizers in parallel,
try to certify automatizers that stop,
and halt when the first automatizer passes the check.
\end{itemize}

The amplification is obtained by repeating
and the use of Chernoff bounds.
\begin{proposition}[Chernoff bounds (see, e.g., {\protect\cite[Chapter 4]{MR})}]{}\hfill\newline{}Let $X_1,X_2,\ldots,X_n\in\{0,1\}$ be independent random variables.
Then if $X$ is the sum of $X_i$ and if $\mu$ is $\mathbf{E}[X]$, for any $\delta$, $0 < \delta \le 1$:
\begin{equation*}
\Pr \{ X < (1-\delta) \mu\} < e^{-\mu\delta^2/2}, \qquad
\Pr \{ X > (1+\delta) \mu\} < e^{-\mu\delta^2/3}.
\end{equation*}
\end{proposition}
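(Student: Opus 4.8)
The plan is to prove this by the standard exponential-moment (Bernstein--Chernoff) method: bound a tail of $X$ by applying Markov's inequality to $e^{tX}$ for a cleverly chosen $t$, use independence to factor the moment generating function, and finish with an elementary calculus estimate. Write $p_i=\Pr\{X_i=1\}$, so that $\mu=\sum_i p_i$. For the upper tail, for any $t>0$ Markov's inequality applied to the nonnegative variable $e^{tX}$ gives
$$\Pr\{X>(1+\delta)\mu\}=\Pr\{e^{tX}>e^{t(1+\delta)\mu}\}\le e^{-t(1+\delta)\mu}\,\mathbf{E}[e^{tX}].$$
By independence $\mathbf{E}[e^{tX}]=\prod_i\mathbf{E}[e^{tX_i}]=\prod_i\bigl(1+p_i(e^t-1)\bigr)$, and since $1+x\le e^x$ this is at most $\prod_i e^{p_i(e^t-1)}=e^{\mu(e^t-1)}$. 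Hence $\Pr\{X>(1+\delta)\mu\}\le\exp\bigl(\mu(e^t-1-t(1+\delta))\bigr)$, and choosing $t=\ln(1+\delta)>0$ (which minimizes the exponent) yields the classical closed form $\Pr\{X>(1+\delta)\mu\}\le\bigl(e^\delta/(1+\delta)^{1+\delta}\bigr)^{\mu}$.

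It then remains to show $e^\delta/(1+\delta)^{1+\delta}\le e^{-\delta^2/3}$ for $0<\delta\le1$; after taking logarithms this is the inequality $h(\delta):=\delta-(1+\delta)\ln(1+\delta)+\tfrac{\delta^2}{3}\le0$. I would verify it by noting $h(0)=0$ and $h'(\delta)=\tfrac{2\delta}{3}-\ln(1+\delta)$, which satisfies $h'(0)=0$ while $h''(\delta)=\tfrac23-\tfrac1{1+\delta}$ changes sign once on $[0,1]$ (at $\delta=\tfrac12$), so $h'$ decreases then increases with $h'(1)=\tfrac23-\ln2<0$; thus $h'\le0$ and therefore $h\le0$ throughout $[0,1]$. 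The lower tail is analogous and slightly cleaner: for $t>0$, Markov applied to $e^{-tX}$ gives $\Pr\{X<(1-\delta)\mu\}\le\exp\bigl(\mu(e^{-t}-1+t(1-\delta))\bigr)$, which is minimized at $t=\ln\frac1{1-\delta}$ and gives $\bigl(e^{-\delta}/(1-\delta)^{1-\delta}\bigr)^{\mu}$; the required bound $g(\delta):=-\delta-(1-\delta)\ln(1-\delta)+\tfrac{\delta^2}{2}\le0$ follows immediately from $g(0)=0$ and $g'(\delta)=\delta+\ln(1-\delta)\le0$, the last step being just $\ln(1+x)\le x$ with $x=-\delta$.

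The mechanical part (Markov, factoring the MGF, optimizing $t$) is routine; the only mildly delicate point is the final calculus estimate for the \emph{upper} tail, and in particular the role of the hypothesis $\delta\le1$. That restriction is exactly what makes $e^\delta/(1+\delta)^{1+\delta}\le e^{-\delta^2/3}$ hold — for large $\delta$ one must replace $\delta^2/3$ by a bound growing like $\delta\ln\delta$ — whereas the $e^{-\delta^2/2}$ estimate for the lower tail in fact holds for all $\delta\in[0,1)$. Since the proposition only asserts these bounds in the regime $0<\delta\le1$, no further case analysis is needed, and this is the step I would expect to spend the most care on.
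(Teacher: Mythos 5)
Your proof is correct: the exponential-moment argument (Markov on $e^{\pm tX}$, factoring the moment generating function via independence and $1+x\le e^x$, optimizing $t$, then the elementary calculus showing $\delta-(1+\delta)\ln(1+\delta)\le-\delta^2/3$ for $0<\delta\le1$ and $-\delta-(1-\delta)\ln(1-\delta)\le-\delta^2/2$) is exactly the standard derivation, and your handling of the constant $1/3$ and of the role of $\delta\le1$ is accurate. Note that the paper does not prove this proposition at all — it is quoted with a citation to Motwani--Raghavan, Chapter 4 — and your argument is precisely the textbook proof referenced there, so there is nothing to reconcile beyond the cosmetic point that strict inequality follows since the calculus inequalities are strict for $\delta>0$ (and the $\mu=0$ and $\delta=1$ lower-tail cases are trivial).
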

\begin{corollary}
Let $X_1,X_2,\ldots,X_n\in\{0,1\}$ be independent random variables.
Then if $X$ is the sum of $X_i$ and if $1\ge \mu_1\ge \mathbf{E}[X] \ge \mu_2 
\ge 0$, for any $\delta$, $0 < \delta \le 1$:
\begin{equation*}
\Pr \{ X < (1-\delta) \mu_2\} < e^{-\mu_2\delta^2/2}, \qquad
\Pr \{ X > (1+\delta) \mu_1\} < e^{-\mu_1\delta^2/3}.
\end{equation*}

\end{corollary}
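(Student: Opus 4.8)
The plan is to deduce both inequalities from the Chernoff bounds of the preceding Proposition, applied not to $X$ itself but to a sum of independent $\{0,1\}$ variables whose expectation is exactly the relevant one of the two bounds $\mu_1,\mu_2$. Throughout write $\mu=\mathbf{E}[X]$, so that $\mu_2\le\mu\le\mu_1$.

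For the lower tail a direct monotonicity argument suffices. Since $0<\delta\le 1$ we have $1-\delta\ge 0$, hence $(1-\delta)\mu_2\le(1-\delta)\mu$ and so $\{X<(1-\delta)\mu_2\}\subseteq\{X<(1-\delta)\mu\}$. The Proposition applied to $X$ with its true expectation $\mu$ gives $\Pr\{X<(1-\delta)\mu\}<e^{-\mu\delta^2/2}$, and from $\mu\ge\mu_2\ge 0$ we get $e^{-\mu\delta^2/2}\le e^{-\mu_2\delta^2/2}$. Chaining these three estimates yields the first inequality (the case $\delta=1$ being trivial, as then $\Pr\{X<0\}=0$).

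For the upper tail this does not work directly, because the Proposition's bound $e^{-\mu\delta^2/3}$ \emph{decreases} in $\mu$ while we only know $\mu\le\mu_1$, so a smaller true mean gives a weaker bound. Instead I would pad the expectation up to $\mu_1$: take a fresh variable $Z\in\{0,1\}$ independent of $X_1,\dots,X_n$ with $\mathbf{E}[Z]=\mu_1-\mu$, which is a valid Bernoulli parameter since $0\le\mu_1-\mu\le\mu_1\le 1$, and set $Y=X_1+\dots+X_n+Z$. Then $Y$ is a sum of $n+1$ independent $\{0,1\}$ variables with $\mathbf{E}[Y]=\mu_1$, and $X\le Y$ pointwise because $Z\ge 0$; hence $\{X>(1+\delta)\mu_1\}\subseteq\{Y>(1+\delta)\mu_1\}$ and the Proposition applied to $Y$ gives $\Pr\{Y>(1+\delta)\mu_1\}<e^{-\mu_1\delta^2/3}$, which is the second inequality.

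The only slightly delicate point is this padding step in the upper-tail case, and in particular the need for the gap $\mu_1-\mu$ to be at most $1$ so that a single auxiliary Bernoulli variable suffices; this is precisely where the hypothesis $\mu_1\le 1$ enters (otherwise one would append several auxiliary $\{0,1\}$ variables whose expectations sum to $\mu_1-\mu$). Everything else is bookkeeping, so no serious obstacle is expected.
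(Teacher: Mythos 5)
Your proof is correct. The paper states this corollary without a proof (it follows the Chernoff-bound proposition as an unproved remark), so there is no authors' argument to compare against line by line. Your lower-tail step (the event $\{X<(1-\delta)\mu_2\}$ is contained in $\{X<(1-\delta)\mu\}$, and $e^{-\mu\delta^2/2}\le e^{-\mu_2\delta^2/2}$) is the obvious monotonicity argument and is surely what the authors intended. For the upper tail, where that argument fails, your trick of padding the sum with one extra independent Bernoulli variable of mean $\mu_1-\mathbf{E}[X]$ is valid: $Y=X+Z$ is again a sum of independent $\{0,1\}$ variables, $\mathbf{E}[Y]=\mu_1$, and $X\le Y$ pointwise, so the proposition applied to $Y$ gives exactly the claimed bound; it also cleanly explains the role of the hypothesis $\mu_1\le 1$ (one auxiliary variable suffices). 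An alternative, perhaps more standard, derivation is to rerun the exponential-moment (Markov) argument behind the proposition: for $t>0$ one has $\mathbf{E}[e^{tX}]\le e^{\mathbf{E}[X](e^t-1)}\le e^{\mu_1(e^t-1)}$ since $e^t-1>0$, and the usual optimization in $t$ yields $\Pr\{X>(1+\delta)\mu_1\}<e^{-\mu_1\delta^2/3}$ (symmetrically for the lower tail with $\mu_2$); that route needs no restriction $\mu_1\le 1$, but it reopens the proof of the proposition, whereas your argument uses it as a black box. Either way, your write-up has no gap.
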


\newcommand{\amplerrdiv}{48}
\newcommand{\amplsimdiv}{64}
\begin{lemma}[amplification]\label{lem:amp}
Every automatizer $W$ is simulated by
a $(4,e^{-m/{\amplerrdiv}})$-correct automatizer $S$,
where $m\in\mathbb{N}$ may depend at most polynomially
on $d\cdot |x|$ (for input $(x,d)$).
Moreover, 
there are polynomials $p$ and $q$ such that
for every $x\in L$ and $d\in\mathbb{N}$,
\begin{equation}\label{eq:strongsim}
{t^{(1-e^{-m/{\amplsimdiv}})}_S(x,d)}
\le \max_{d'\le q(d\cdot|x|)} p(t_W(x,d')).
\end{equation}
\end{lemma}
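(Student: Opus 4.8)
The plan is to build $S$ from $W$ by the standard amplification ``run many independent copies and stop on a weighted majority'', with the majority threshold chosen so that the two Chernoff estimates we need come out at exactly the advertised rates $e^{-m/48}$ and $e^{-m/64}$. On input $(x,d)$, $S$ first computes $d'=4d$ and $m=m(|x|,d)$ — an arbitrary polynomial in $d\cdot|x|$, which we may assume exceeds a fixed absolute constant — and then starts $m$ independent copies of $W$, each on the input $(x,d')$, running them in parallel by round-robin dovetailing while simultaneously running a semidecision procedure for $L$, so that (by the Remark after Definition~\ref{def:aut}) $S$'s output is always either $1$ or $\infty$. As soon as at least $\tfrac38 m$ of the copies have returned $1$, $S$ returns $1$.

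Completeness is immediate: for $x\in L$ every copy of $W(x,d')$ eventually returns $1$, so the threshold is reached and $S(x,d)=1$. For correctness, fix $n$ and $d$ and call $r\in\overline L\cap\{0,1\}^n$ \emph{bad} if $\Pr_W\{W(r,d')=1\}>\tfrac14$; as $W$ is a $(1,\tfrac14)$-correct automatizer, $\Pr_{r\gets D_n}\{r\text{ is bad}\}<\tfrac1{d'}=\tfrac1{4d}$, which accounts for the factor $\lambda=4$. For a non-bad $r$ the $m$ copies return $1$ independently, each with probability $\le\tfrac14$, so the number $X$ of copies that ever return $1$ has $\mathbf{E}[X]\le m/4$, and the Chernoff upper tail (estimate $\mu_1=m/4$, $\delta=\tfrac12$) gives $\Pr_S\{S(r,d)=1\}=\Pr\{X\ge\tfrac38 m\}<e^{-(m/4)(1/4)/3}=e^{-m/48}$. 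Hence $S$ is $(4,e^{-m/48})$-correct.

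For the simulation, fix $x\in L$ and $d$ and let $Y$ be the number of the $m$ copies of $W(x,d')$ that return $1$ within their own first $t_W(x,d')$ steps. Each copy does so with probability $\ge\tfrac12$, so $\mathbf{E}[Y]\ge m/2$ and the Chernoff lower tail (estimate $\mu_2=m/2$, $\delta=\tfrac14$) gives $\Pr\{Y<\tfrac38 m\}<e^{-(m/2)(1/16)/2}=e^{-m/64}$. On the complementary event the round-robin reaches the moment when $\tfrac38 m$ copies have returned $1$ after advancing each copy by at most $t_W(x,d')$ steps, so — absorbing the slowdown of a universal machine simulating the fixed machine $W$ and the bookkeeping that counts finished copies — $S(x,d)$ halts within $\poly(m)\cdot t_W(x,d')$ steps. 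Since $m\le\poly(d\cdot|x|)$, with $q(y)=4y$ (so $d'=4d\le q(d|x|)$) and a suitable polynomial $p$ this is at most $\max_{d'\le q(d|x|)}p\bigl(t_W(x,d')\bigr)$ except with probability $e^{-m/64}$, which is~\eqref{eq:strongsim}; and since $1-e^{-m/64}\ge\tfrac12$ the same estimate (after enlarging $p$ to also absorb the factor $|x|\cdot d$ allowed there) bounds the median time $t_S(x,d)$, so $S$ simulates $W$.

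The only genuine design decision is the vote threshold $\tfrac38 m$: it is exactly the value that makes the upper Chernoff tail against per-copy probability $\tfrac14$ evaluate to $e^{-m/48}$ and the lower Chernoff tail against per-copy probability $\tfrac12$ evaluate to $e^{-m/64}$, so any other threshold would trade one rate off against the other. Everything else is bookkeeping: checking that the dovetailing plus the counting really costs only polynomial overhead, that it coexists cleanly with the parallel $L$-semidecision procedure enforcing the ``$1$ or $\infty$'' behaviour, and that the substituted parameter $d'=4d$ always stays within the window $d'\le q(d\cdot|x|)$ that the definition of simulation permits.
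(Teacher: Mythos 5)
Your proposal is correct and follows essentially the same route as the paper's own proof: the identical construction (run $m$ parallel copies of $W(x,4d)$ and halt once a $\frac38$ fraction of them halt) with the identical two Chernoff estimates (upper tail at $\mu_1=m/4$, $\delta=\frac12$ giving $e^{-m/48}$; lower tail at $\mu_2=m/2$, $\delta=\frac14$ giving $e^{-m/64}$). You merely make explicit the bookkeeping the paper leaves implicit (dovetailing overhead, $d'=4d\le q(d|x|)$, passing from the high-probability time to the median), which is fine.
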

\begin{proof}
$S(x,d)$ runs $m$ copies of $W(x,4d)$
in parallel and stops as soon as the $\frac38$ fraction
of copies stop.

By Chernoff bounds, $S$ is $(4,e^{-m/{\amplerrdiv}})$-correct.
\ourcomment{
For $x\notin L$ that is not erroneous for the old $W$,\\
$\mu=\mathbf{E}(\text{stopped})< \frac{m}4$.\\
$\Pr\{\text{stopped}\ge\frac{3m}8\}=
<e^{-m/{\amplerrdiv}}.$
}
The ``strong'' simulation condition (\ref{eq:strongsim}) is satisfied because
by Chernoff bounds the running time of the fastest $\frac{3}{8}$ fraction
of executions is less than median time with probability at least $1-e^{-m/{\amplsimdiv}}$.%
\end{proof}
\ourcomment{
For $x\in L$, let $T=$old median, $\chi_i=i$'th execution had time $\le T$.\\
$\mu=\mathbf{E}\sum_i\chi_i \ge \frac{m}2$.\\
$\Pr\{\text{new $S$ does not stop in time $\le T$}\} \le \Pr\{\sum_i\chi_i < \frac{3m}8\}\le$\\
$
< e^{-m/{\amplsimdiv}}$.
}

\begin{theorem}[optimal automatizer]\label{th:opt}
Let $(D,L)$ be a distributional proving problem,
where $L$ is recursively enumerable and
$D$ is polynomial-time samplable, i.e.,
there is a polynomial-time randomized Turing machine
that given $1^n$ on input
outputs $x$ with probability $D_n(x)$
for every $x\in\{0,1\}^n$.
Then there exists an optimal automatizer for $(D,L)$.
\end{theorem}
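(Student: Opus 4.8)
The plan is to implement the three bullet points listed above. First I would fix a computable enumeration $M_1,M_2,\dots$ of all randomized machines and, applying Lemma~\ref{lem:amp} to each $M_i$ with an amplification parameter $m_i$ that is a large polynomial in $i\cdot d\cdot|x|$, replace it by its amplified version $S_i$: thus \emph{if} $M_i$ happens to be a correct automatizer then $S_i$ is $(4,e^{-m_i/\amplerrdiv})$-correct and obeys the strong simulation~(\ref{eq:strongsim}), while for an arbitrary $M_i$ the machine $S_i$ is still a well-defined algorithm about which we assume nothing. The optimal automatizer $U$, on input $(x,d)$, runs in stages $\tau=1,2,\dots$; in stage $\tau$ it grants $\tau$ steps to each execution $S_i(x,d'_i)$ with $i\le\tau$, where the ``inner parameter'' $d'_i$ is a fixed polynomial in $d$ and $i$ that grows with $i$ (its role is explained below), and $\tau$ steps to the certification test of every index $i\le\tau$ whose execution has already halted. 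In parallel $U$ runs a semidecision procedure for $L$ on $x$; as soon as that procedure accepts, or some certification test passes, $U$ outputs $1$ and halts, and otherwise it runs forever. The parallel thread enforces conditions~\ref{def:aut:diverge} and~\ref{def:aut:comp} of Definition~\ref{def:aut} and never causes a false acceptance (as $D_n$ is concentrated on $\overline L$), so all the content is in the tests.

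The certification test for index $i$ is launched the first time $S_i(x,d'_i)$ halts, after $t_0$ steps: $U$ draws $k_i$ samples $r_1,\dots,r_{k_i}\gets D_{|x|}$ with the given sampler, runs each $S_i(r_j,d'_i)$ for $t_0$ steps, and passes iff fewer than $\theta_i k_i$ of these runs output $1$. The threshold $\theta_i=\Theta(1/(d\,i^2))$ decays only \emph{polynomially} in $i$ (and only inverse-polynomially in $d$), and $k_i=\Theta(d\,i^2\log(d\,i^2))$ is polynomial in $d$ and $i$; they are chosen via Chernoff so that a run with true acceptance rate $\ge 2\theta_i$ is rejected except with probability $\delta_i=\Theta(1/(d\,i^2))$, while a run with rate $\le\theta_i/2$ is accepted with probability $\ge 3/4$. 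The reason for inflating $d'_i$ with $i$ is that a correct $M_i$, run at parameter $d'_i$ and amplified, accepts $r\gets D_{|x|}$ with probability at most $1/(4d'_i)+e^{-m_i/\amplerrdiv}$; choosing $d'_i$ a large enough multiple of $d\,i^2$, and $m_i$ large enough in $i$ to kill the exponential term for \emph{every} $i$, makes this at most $\theta_i/2$, so every genuinely correct automatizer — in particular our eventual target — comfortably passes its own test, while the sums $\sum_i\theta_i$ and $\sum_i\delta_i$ converge because $\sum_i 1/i^2<\infty$.

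For \textbf{correctness} of $U$ one must check it is a $(1,\tfrac14)$-correct automatizer, i.e.\ that $\Pr_{r\gets D_n}\{\Pr_U\{U(r,d)=1\}>\tfrac14\}<1/d$. A union bound over the (at most one per index) tests gives $\Pr_U\{U(r,d)=1\}\le\sum_i\Pr_U\{S_i(r,d'_i)\text{ halts after }t_0\text{ steps and test }i\text{ passes}\}$. Conditioning on the execution of $S_i$ on $r$ fixes $t_0$: if the acceptance rate of $S_i(\cdot,d'_i)$ on $D_n$ within $t_0$ steps is $\ge 2\theta_i$, the (fresh, hence independent) test passes with probability at most $\delta_i$ irrespective of $r$; otherwise the term is at most the probability $\rho_i(r)$ that $S_i(r,d'_i)$ halts within a time bound for which that rate is still below $2\theta_i$, and since the rate is non-decreasing in the bound, $\mathbf{E}_{r\gets D_n}[\rho_i(r)]<2\theta_i$. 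Hence $\Pr_U\{U(r,d)=1\}\le\sum_i\delta_i+\sum_i\rho_i(r)=O(1/d)+\sum_i\rho_i(r)$, so a ``dangerous'' $r$ (one with $\Pr_U\{U(r,d)=1\}>\tfrac14$) must have $\sum_i\rho_i(r)=\Omega(1)$; applying Markov's inequality to $\sum_i\rho_i(r)$, whose expectation is $\sum_i 2\theta_i=O(1/d)$, gives $\Pr_{r\gets D_n}\{r\text{ dangerous}\}<1/d$ once the constants are chosen appropriately.

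For \textbf{optimality}, let $W$ be any automatizer; it equals $M_{i_W}$ for a fixed index $i_W$, so inside $U$ it is the amplified $S_{i_W}$, which is $(4,e^{-m_{i_W}/\amplerrdiv})$-correct and satisfies~(\ref{eq:strongsim}). On $x\in L$: by~(\ref{eq:strongsim}), with probability $\ge 1-e^{-m_{i_W}/\amplsimdiv}$ — which is at least $0.99$, since $m_{i_W}$ is a large polynomial in $d|x|$ — the execution $S_{i_W}(x,d'_{i_W})$ halts within $B:=\max_{d''\le q(d'_{i_W}|x|)}p(t_W(x,d''))$ steps, and then its test passes with probability $\ge 3/4$; so with probability $\ge\tfrac12$ the machine $U$ halts by the time it has completed a stage $\mathrm{poly}(B,d,|x|)$, and since $d'_{i_W}$ is polynomial in $d$ this bounds $t_U(x,d)$ by a polynomial in $\max_{d''\le q'(d|x|)}t_W(x,d'')$ and $|x|\cdot d$ — exactly the simulation condition (the polynomials are allowed to depend on $W$, i.e.\ on $i_W$). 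The step I expect to be the main obstacle — and the one the parameter choices above are engineered to get through — is reconciling the two opposing demands on the tests: the union bound in the correctness proof forces the thresholds $\theta_i$ and failure bounds $\delta_i$ to be summable over all infinitely many candidate machines, yet the certification of the \emph{target} machine must run in time polynomial in $d$, which forbids thresholds shrinking exponentially in $d$. Using thresholds that decay only polynomially in $i$, together with per-index inflated parameters $d'_i$ and amplification strengths $m_i$, keeps the infinite sums finite and each individual test cheap; making these dependencies mutually consistent — and compatible with the median-time bound, which is why the strong simulation~(\ref{eq:strongsim}) rather than plain simulation is needed — is the technical heart of the argument.
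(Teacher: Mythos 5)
Your proof is correct and rests on the same skeleton as the paper's: the amplification lemma with its strong simulation~(\ref{eq:strongsim}), a sampling-based certification whose time bound is the halting time of the candidate's run on the given input, and a parallel enumeration that outputs $1$ once some candidate is certified (with a semidecision thread for completeness). The differences are in the bookkeeping. The paper runs only the first $\log_* n$ machines, all with the same inner parameter $d'=\substd$ and the same test parameters, splits the error budget evenly as $\frac1{4\log_* n}$ per machine, and proves correctness machine by machine: assuming $D(E_i)\ge\frac1{d\log_* n}$ it applies Lemma~\ref{lem:certify} to the truncated machine $A_i^{T_i-1}$ (where $T_i$ is the first time bound at which certification is unlikely to accept) and derives a numerical contradiction. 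You instead dovetail over \emph{all} machines, giving index $i$ its own inflated parameter $d'_i=\Theta(d\,i^2)$ and summable budgets $\theta_i,\delta_i=\Theta(1/(d\,i^2))$, and prove correctness globally: the split of each term into $\rho_i(r)+\delta_i$ at the critical time $T^*_i$ plays exactly the role of $T_*$ in the paper's chain of inequalities, and the union bound plus Markov's inequality on $\sum_i\rho_i(r)$ replaces the per-machine contradiction. Your route buys a cleaner correctness argument and a simulation claim that needs no ``consider only $x$ longer than the Turing number of $A$'' caveat, since every machine is present at every input length; the paper's choice buys a concrete construction with only $\log_* n$ parallel processes and uniform, index-independent test parameters. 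One detail to state explicitly: the enumerated $S_i$ should be normalized, as the paper does via assumption~(\ref{assumption}), so that halting and outputting $1$ coincide; otherwise the event that triggers your test and defines $\rho_i(r)$ (halting) would not match the quantity the test estimates (outputting $1$), and $\mathbf{E}_{r\gets D_n}[\rho_i(r)]\le 2\theta_i$ would not follow. Since the amplified machines halt only to output $1$ by construction, this is a formality, and with it your argument goes through.
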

\begin{proof}
For algorithm $A$, we say that it is \emph{$(\lambda,\epsilon)$-correct for input length $n$
and parameter $d$} if
it it satisfies condition~\ref{def:aut:corr} of Definition~\ref{def:aut}
for $n$ and $d$.
If an algorithm is $(\lambda,\epsilon)$-correct for every $n$ (resp., every $d$),
we omit $n$ (resp., $d$).

In order to check an algorithm for correctness,
we define a \emph{certification} procedure
that takes an algorithm $A$ and distinguishes 
between the cases where $A$ is $(4,{\errlowsubst})$-correct for given $n,d$
(from Lemma~\ref{lem:amp} we know that one can assume such correctness)
or it is not $(1,{\errhighsubst})$-correct
($(1,{\errhighsubst})$-correct automatizers suffice for the correctness of further constructions).
W.l.o.g. we may assume that 
\begin{equation}
\text{$A$ satisfies conditions~\ref{def:aut:diverge} and~\ref{def:aut:comp} of Definition~\ref{def:aut}}\label{assumption}
\end{equation}
(for the latter condition, notice that $L$ is recursively
enumerable and one may run its semidecision procedure in parallel).

The certification procedure has a subroutine $\textsc{Test}$
that estimates the probability of $A$'s error simply by
repeating $A$ and couting its faults.

\bigskip
\noindent$\textsc{Test}(A,x,d',T,l,f)$:
\begin{enumerate}
\item Repeat for each $i\in\{1,\ldots,l\}$
\begin{enumerate}
\item If $A(x,d')$ stops in $T$ steps, let $c_i=1$; otherwise let $c_i=0$.
\end{enumerate}
\item If $\sum_i c_i \ge l/f$, then reject; otherwise accept.
\end{enumerate}

\begin{lemma}\label{lem:test}For every $A,x,d',T,l,f$,
\begin{enumerate}
\item
If $A(x,d')$ stops with probability less than $\testerrflowproved$,
then $\textsc{Test}$ will reject it with probability
less than 
$\testerrrej$.
\item
If $A(x,d')$ stops in time at most $T$ with probability more than $\errfhigh$,
then $\textsc{Test}$ will accept it with probability 
less than $\testerracc$.
\end{enumerate}
\end{lemma}
\begin{proof}
Follows directly from Chernoff bounds.%
\end{proof}
\ourcomment{
1. $\mu = \mathbf{E}{\sum} < l/(1.01f)$.\\
$\Pr\{\sum \ge l/f\} \le \Pr\{\sum > (1+0.01)\mu\} < e^{-10^{-4}\mu/3}$.\\
2. $\mu = \mathbf{E}{\sum} > l/(0.99f)$.\\
$\Pr\{\sum < l/f\} \le \Pr\{\sum < (1-0.01)\mu\} < e^{-10^{-4}\mu/2}$.
}

\bigskip
\noindent$\textsc{Certify}(A,n,d',T,k,l,f)$:
\begin{enumerate}
\item Repeat for each $i\in\{1,\ldots,k\}$
\begin{enumerate}
\item Generate $x_i$ according to $D_n$.
\item If $\textsc{Test}(A,x_i,d',T,l,f)$ rejects, let $b_i=1$; otherwise let $b_i=0$.
\end{enumerate}
\item If $\sum_i b_i \ge k/(2d')$, then reject; otherwise accept.
\end{enumerate}

\begin{lemma}\label{lem:certify}
Let $d,n,T\in\mathbb{N}$.
Let $A$ be an algorithm pretending to be an automatizer.
Run $$\textstyle\textsc{Certify}(A,n,d',T,k,l,f).$$
Then 
\begin{enumerate}
\item
If $A$ is $(4,{\testerrflowused})$-correct, 
then $A$
is accepted by $\textsc{Certify}$ 
almost for sure,
failing with probability less than 
${\certerrrej}$.

\item
Let $A^T$ be a restricted version of $A$
that behaves similarly to $A$ for $T$ steps
and enters an infinite loop afterwards.
If $A^T$ is not $(1,{\errfhigh})$-correct
for length $n$ and parameter $d$,
then $A$
is accepted by $\textsc{Certify}$
with probability less than 
${\certerracc}$.
\end{enumerate}
\end{lemma}
\begin{proof}
1. Let $\Delta=\{x\in\mathop{\mathrm{Im}} D_n\;|\; \Pr\{A(x,d)=1\}>{\testerrflowused} \}$.
By assumption, $D_n(\Delta)<\frac1{4d'}$.

The certification procedure takes $k$ samples from $D_n$.
For every sample $x_i\in\overline{L}\setminus\Delta$, the probability that
the corresponding $b_i$ equals 1 is less than
$\testerrrej$.
Thus, the probability that there is a sample $x_i$
from $\overline{L}\setminus\Delta$ 
that yields $b_i=1$
is less than
$k \cdot \testerrrej$.
Denote this unfortunate event by $E$.
If it does not hold, only samples from $\Delta$ may cause $b_i=1$
and by Chernoff's bound
\[
 \Pr \{ \sum_i b_i \ge k/(2d')\;|\; \overline{E} \}
 <
 e^{-\frac{k}{12d'}}.
\]
Thus, the total probability of reject is as claimed.

2. Let $\Delta=\{x\in\mathop{\mathrm{Im}} D_n\;|\; \Pr\{A(x,d)=1\}>{\errfhigh} \}$.
By assumption, $D_n(\Delta)\ge\frac1{d'}$.

The certification procedure takes $k$ samples from $D_n$.
For every sample $x_i\in\Delta$, the probability that
the corresponding $b_i$ equals 0 is less than
$\testerracc$.
Thus, the probability that there is a sample $x_i$
from $\Delta$ 
that yields $b_i=0$
is less than
$k \cdot \testerracc$. 
Denote this unfortunate event by $E$.
Assuming it does not hold only samples outside $\Delta$
may cause $b_i=0$ and by Chernoff's bound
\[
 \Pr \{ \sum_i b_i < k/(2d')\;|\; \overline{E} \}
 <
 e^{-\frac{k}{8d'}}.
\]

\vspace{-3mm}
\end{proof}

We now define the optimal automatizer $U$.
It works as follows:

\bigskip
\noindent$U(x,d)$:
\begin{enumerate}
\item Let 
\begin{eqnarray*}
n&=&|x|,\\
                           d'&=&\substd,\\ 
			   f&=&\substf,\\ 
                           k&=&\substk,\\ 
                           l&=&\substl. 
\end{eqnarray*}
\item Run the following processes for $i\in\{1,\ldots,\log_*n\}$ in parallel:
\begin{enumerate}
\item Run $A_i(x,d')$, the algorithm with Turing number $i$
      satisfying assumption~(\ref{assumption}),
      and compute the number of steps $T_i$ made by it
      before it stops.
\item If $\textsc{Certify}(A_i,n,d',T_i,k,l,f)$ accepts,\\
      then output $1$ and stop $U$ (all processes).
\end{enumerate}
\item If none of the processes has stopped, go into an infinite loop.
\end{enumerate}
\paragraph{Correctness.}
We now show that $U$ errs with probability less than 1/4.


What are the inputs 
that cause $U$ to error?
For every such input $x$ there exists $i\le\log_*n$ such that
\begin{equation}
u^i_x=\sum_{T=1}^\infty p^i_{x,T}c_T^i \ge \frac1{4\log_*n},\label{eq:bad}
\end{equation}
where
\begin{eqnarray*}
&p^i_{x,t}&=\Pr\{\text{$A_i(x,d')$ stops in exactly $t$ steps}\},\\
&c^i_t&=\Pr\{\text{$\textsc{Certify}(A_i,n,d',t,k,l,f)$ accepts}\}.
\end{eqnarray*}
Let $E_i$ be the set of inputs $x\notin L$ satisfying inequality (\ref{eq:bad}).

We claim that $D(E_i)<\frac1{d\log_*n}$,
which suffices to show the $(1,1/4)$-correctness.

Assume the contrary. 
Let $T_i=\min\{t\;|\;c^i_t<{\certerracc}\}$.
Note that
by Lemma~\ref{lem:certify}
$A_i^{T_i-1}$
is $(1,\errfhigh)$-correct for $n$ and $d'$,
i.e.,
$$\Pr_{x\gets D_n}\{\sum_{T<T^i_*} p^i_{x,T}>
{\errfhigh}
\}<
\frac1{d'}.$$
We omit $i$ and $n$ in the estimations that follow.
Here is how we get a contradiction:
\begin{multline*}
\frac1{4d\log_*^2n}\le 
\frac{D(E_i)}{4\log_*n}= 
\sum_{x\in E_i} \frac1{4\log_*n} D(x) \le 
\sum_{x\in E_i} u_x D(x)\le\\ 
\sum_{x\notin L} u_x D(x) =
\sum_{x\notin L} \sum_{T=1}^\infty p_{x,T} c_T D(x) =\\
\sum_{x\notin L} \left(
 \sum_{T < T_*} p_{x,T} c_T D(x) +
 \sum_{T\ge T_*} p_{x,T} c_T D(x) 
\right)\le\\ 
\sum_{T<T_*} \left(
 \sum_{x\notin L,\ \sum\limits_{t<T_*} p_{x,t}\le {\errfhigh}} p_{x,T}  D(x) +
 \sum_{x\notin L,\ \sum\limits_{t<T_*} p_{x,t} >  {\errfhigh}} p_{x,T}  D(x)
\right)\\
\mbox{\hfill}+\certerracc\le\\
{\errfhigh}
+\frac1{d'}+\certerracc
<
\errhighsubst+
\frac1{\substd}+
\certerraccsubst
=
\frac1{4d\log^2_*n}.
\end{multline*}

\paragraph{Simulation.}
Assume we are give a correct automatizer $A^s$.
Plug in $m=\amplerrdiv\cdot\ln(\inverrlowsubst)$ into Lemma~\ref{lem:amp}.
The lemma yields that $A^s$ is ``strongly'' simulated by a
$(4,{\errlowsubst})$-correct automatizer $A$.
It remains to estimate,
for given ``theorem'' $x\in L$,
the (median) running time of $U$
in terms of $t^{(1-e^{-m/\amplsimdiv})}_A(x,d)=
t^{(1-\frac1{(\inverrlowsubst)^{3/4}})}_A(x,d)$
(as we know that the latter is bounded
by $\max\limits_{d'\le q(d\cdot|x|)} p(t_{A^s}(x,d'))$ for a polynomials $p$ and $q$).

Since the definition of simulation is asymptotic,
we consider only $x$ of length greater than the Turing number
of $A$. 
By Lemma~\ref{lem:certify},
$A$ is not certified with probability less than
$\certerrrej\le\certerrrejsubst$.
If $A$ is certified, $U$ stops in time
upper bounded by a polynomial of the time spent by $A$
with an overhead polynomial in $|x|$ and $d$
for running other algorithms and the certification procedures.
Thus the median time $t_U(x,d)$
is bounded by a polynomial in $|x|$, $d$,
and $t_A^{(\frac12+\certerrrejsubst)}(x,d)
\le
t^{(1-\frac1{(\inverrlowsubst)^{3/4}})}_A(x,d)$.%
\end{proof}

\section{Heuristic proof systems}\label{sec:systems}

In this section we define proof systems that make errors
(claim a small fraction of wrong theorems).
We consider automatizable systems of this kind and show that every such system
defines an automatizer taking time at most polynomially larger
than the length of the shortest proof in the initial system.
This shows that automatizers form a more general notion
than automatizable heuristic proof systems.
The opposite direction is left as an open question.

\begin{definition}
Randomized Turing machine
$\Pi$ is a \emph{heuristic proof system} for distributional proving
problem $(D,L)$ if it satisfies the following conditions.
\begin{enumerate}
\item The running time of $\Pi(x,w, d)$ is bounded by a
polynomial in $d$, $|x|$, and $|w|$.
\item (Completeness) For every $x\in L$ and every $d\in\mathbb{N}$, there exists a
string $w$ such that $\Pr\{\Pi(x,w,d)=1\}\ge \frac12$. Every such string $w$
is called a $\Pi^{(d)}$-proof of $x$.
\item (Soundness) 
$\Pr_{x\gets D_n} \{\exists w: \Pr\{\Pi(x,w,d)=1\}> \frac14\}<\frac{1}{d}$.
\end{enumerate}
\end{definition}

\begin{definition}
Heuristic proof system is \emph{automatizable} if there is
a randomized Turing machine $A$
satisfying the following conditions.
\begin{enumerate}
\item For every $x\in L$ and every $d\in\mathbb{N}$, 
with probability at least $\frac12$ algorithm $A(x,d)$ outputs
a correct $\Pi^{(d)}$-proof of size
bounded by 
a polynomial in $d$, $|x|$, and $|w|$,  where $w$
is the shortest $\Pi^{(d)}$-proof of $x$.
\item The running time of $A(x,d)$ is bounded by 
a polynomial in $|x|$, $d$, and the size of its own output.
\end{enumerate}
\end{definition}

\begin{definition}
We say that heuristic proof system $\Pi_1$ \emph{simulates} heuristic proof
system $\Pi_2$ if there exist polynomials $p$ and $q$ such that for every 
$x\in L$, the shortest
$\Pi_1^{(d)}$-proof of $x$ has size at most 
\[
p(d\cdot|x|\cdot \max_{d'\le q(|x|d)}\{\mbox{the size of the shortest $\Pi_2^{(d')}$-proof of $x$}\}).
\]
\end{definition}
Note that this definition essentially ignores proof systems
that have much shorter proofs for some inputs than the inputs themselves.
We state it this way for its similarity to the automatizers case.

\begin{definition}
Heuristic proof system $\Pi$ is \emph{polynomially bounded}
if there exists a polynomial
$p$ such that for every $x\in L$ and every $d\in\mathbb{N}$,
the size of the shortest 
$\Pi^{(d)}$-proof of $x$ is bounded by $p(|x|d)$.
\end{definition}

\begin{proposition}
If heuristic proof system $\Pi_1$ simulates system $\Pi_2$ and $\Pi_2$ is polynomially
bounded, then $\Pi_1$ is also polynomially bounded.
\end{proposition}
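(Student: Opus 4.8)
The statement is the purely syntactic fact that simulation of heuristic proof systems preserves polynomial boundedness, and the plan is simply to unwind the two definitions and compose polynomials. First I would fix the polynomials: since $\Pi_1$ simulates $\Pi_2$, let $p$ and $q$ be the polynomials from the definition of simulation, and since $\Pi_2$ is polynomially bounded, let $p_2$ be the polynomial bounding the size of its shortest $\Pi_2^{(d)}$-proofs. Given $x \in L$ and $d \in \mathbb{N}$, I would bound the size of the shortest $\Pi_1^{(d)}$-proof of $x$ from above by
\[
p\Bigl(d\cdot|x|\cdot \max_{d'\le q(|x|d)}\{\text{size of shortest $\Pi_2^{(d')}$-proof of }x\}\Bigr)
\le p\Bigl(d\cdot|x|\cdot \max_{d'\le q(|x|d)} p_2(|x|d')\Bigr).
\]
Then, since $p_2$ is a polynomial and hence monotone in the relevant range and $d' \le q(|x|d)$, the inner maximum is at most $p_2(|x|\cdot q(|x|d))$, which is a polynomial in $|x|$ and $d$. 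Substituting this into the outer $p$ gives a bound of the form $p\bigl(d\cdot|x|\cdot p_2(|x|\cdot q(|x|d))\bigr)$, which is again a polynomial in $|x|\cdot d$; calling this composed polynomial $p_1$ establishes that $\Pi_1$ is polynomially bounded.

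There is essentially no obstacle here: the only things to be careful about are the standard conventions that polynomials may be taken monotone nondecreasing on the nonnegative integers (so the $\max$ over $d' \le q(|x|d)$ is controlled by plugging in the endpoint) and that composition of polynomials is a polynomial. One minor point worth a sentence in the write-up is the caveat already flagged after the definition of simulation, namely that this notion ignores proof systems with proofs much shorter than their inputs; but that does not affect the argument, since polynomial boundedness is itself phrased in terms of an upper bound $p(|x|d)$, and our composed bound is of exactly that form. Thus the proof is a one-line composition of polynomials once the definitions are spelled out, entirely parallel to the first part of Proposition~\ref{prop:sim+opt} for automatizers.
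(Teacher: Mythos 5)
Your proof is correct: it is exactly the routine unwinding of the two definitions plus composition of (monotone) polynomials that the paper has in mind, which is why the paper states this proposition without proof, just as it does for the analogous claim about automatizers. The one point you rightly handle is replacing the $\max$ over $d'\le q(|x|d)$ by the endpoint value $p_2(|x|\cdot q(|x|d))$ and observing that the resulting composition is still a polynomial in $|x|\cdot d$; nothing more is needed.
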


We now show how automatizers and automatizable heuristic proof systems are related.

Consider automatizable proof system $(\Pi, A)$ for
distributional proving problem $(D,L)$
with recursively enumerable language $L$. Let us consider the following 
algorithm $A_{\Pi}(x,d)$:
\begin{enumerate}
\item Execute 1000 copies of $A(x,d)$ in parallel.\\
For each copy,
\begin{enumerate}
\item if it stops with result $w$, then 
\begin{itemize}
\item
execute $\Pi(x,w,d)$ 10000 times;
\item
if there were at least 4000 accepts of $\Pi$ (out of 10000),
stop all parallel processes and output $1$.
\end{itemize}
\end{enumerate}

\item Execute the enumeration algorithm for $L$;
output 1 if this algorithm says that $x\in L$;
go into an infinite loop otherwise.
\end{enumerate}

\begin{proposition}
If $(\Pi, A)$ is a (correct) heuristic automatizable proof system
for recursively enumerable language $L$,
then $A_{\Pi}$ is a correct automatizer for $x\in L$ and 
$t_{A_{\Pi}}(x,d)$
is bounded by polynomial in size of the shortest $\Pi_d$-proof of $x$.
\end{proposition}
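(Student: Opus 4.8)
The plan is to verify the two requirements of an automatizer (Definition~\ref{def:aut}) for the algorithm $A_\Pi$, namely correctness-on-theorems (it halts with output $1$ on every $x\in L$, with the promised time bound) and soundness-on-samples (the error probability bound of condition~\ref{def:aut:corr}).

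First I would handle completeness. Fix $x\in L$ and $d\in\mathbb{N}$. Since $(\Pi,A)$ is automatizable, each of the $1000$ parallel copies of $A(x,d)$ independently, with probability at least $\tfrac12$, halts and outputs a correct $\Pi^{(d)}$-proof $w$ of $x$ whose size is polynomial in $d$, $|x|$, and the size of the shortest $\Pi^{(d)}$-proof. By a Chernoff/union bound, with overwhelming probability at least one copy outputs such a $w$, and moreover the first copy to do so does so within time polynomial in $d$, $|x|$, and the shortest-proof size (here I use that $A$'s running time is polynomial in $|x|$, $d$, and its own output size, which is itself polynomial in the shortest-proof size). Given a correct $\Pi^{(d)}$-proof $w$, the completeness of $\Pi$ guarantees $\Pr\{\Pi(x,w,d)=1\}\ge\tfrac12$, so among $10000$ runs of $\Pi(x,w,d)$ we see at least $4000$ accepts with overwhelming probability; hence $A_\Pi$ outputs $1$. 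And even in the unlikely event that none of the copies produces a good proof quickly (or $\Pi$'s checks fail), step~2 runs the semidecision procedure for $L$ and eventually outputs $1$ because $x\in L$. So $A_\Pi(x,d)=1$ always, and $t_{A_\Pi}(x,d)$ is bounded by a polynomial in the size of the shortest $\Pi^{(d)}$-proof of $x$ (and in $|x|$ and $d$), as claimed.

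Next I would handle soundness. Fix $n$ and $d$. Call $x\in\overline{L}\cap\{0,1\}^n$ \emph{bad} if there exists a $w$ with $\Pr\{\Pi(x,w,d)=1\}>\tfrac14$; by the soundness of $\Pi$, $\Pr_{x\gets D_n}\{x\text{ bad}\}<\tfrac1d$. It remains to show that for every \emph{good} $x\in\overline{L}$, the probability that $A_\Pi(x,d)$ outputs $1$ is at most $\tfrac14$ (so that condition~\ref{def:aut:corr} holds with $\lambda=1$, $\epsilon=\tfrac14$). Note step~2 never outputs $1$ on $x\notin L$, so the only danger is step~1. If $x$ is good, then for \emph{every} string $w$ we have $\Pr\{\Pi(x,w,d)=1\}\le\tfrac14$, so for any fixed $w$ produced by some copy of $A$, the number of accepts among $10000$ runs of $\Pi(x,w,d)$ has expectation at most $2500$; by Chernoff, the probability it reaches $4000$ is exponentially small, say below $e^{-c}$ for a large constant $c$. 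A union bound over the at most $1000$ copies (each of which may trigger at most one such batch of $\Pi$-checks — or, to be careful, one can union-bound over all the $\Pi$-check batches that ever occur, again at most $1000$) shows the overall probability that step~1 ever outputs $1$ on a good $x$ is below $1000\,e^{-c}<\tfrac14$. Hence $A_\Pi$ is a $(1,\tfrac14)$-correct automatizer, i.e.\ an automatizer.

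The main obstacle, and the only place that needs genuine care rather than bookkeeping, is the soundness argument: one must be sure that on a good $x\notin L$ the accept-counts of $\Pi$ stay sub-threshold \emph{uniformly over all proof strings $w$} that any copy of $A$ might output, since $A$ is an adversarially-behaving algorithm whose outputs we do not control. The key point making this work is that goodness of $x$ is a statement about \emph{all} $w$ simultaneously (there is no $w$ that $\Pi$ accepts with probability $>\tfrac14$), so the per-$w$ Chernoff bound applies no matter what $A$ emits, and the number of distinct $\Pi$-check batches is bounded by the fixed constant $1000$ independent of everything. A secondary, routine point is confirming that the constants $1000$, $10000$, $4000$ are chosen so that $1000\,e^{-c}<\tfrac14$ and so that the completeness failure probability is also below, say, any desired small constant; these follow from standard Chernoff estimates exactly as in Lemma~\ref{lem:test}, and I would not grind through them.
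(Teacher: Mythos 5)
Your proposal is correct and follows essentially the same route as the paper's own proof: soundness via the bad set $\{x\in\overline{L}\mid\exists w:\Pr\{\Pi(x,w,d)=1\}>\frac14\}$ with a per-$w$ Chernoff bound and a union bound over the $1000$ copies, completeness via the semidecision procedure for $L$, and the time bound via Chernoff estimates showing some copy of $A$ quickly produces a short correct proof that $\Pi$ then accepts. Your explicit remark that goodness of $x$ quantifies over all $w$ (so the bound holds no matter which proof string $A$ emits) is exactly the point the paper treats implicitly when it says the exponentially small probability ``remains much smaller than $\frac14$ even after multiplying by $1000$.''
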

\begin{proof}
\emph{Soundness (condition~\ref{def:aut:corr} in Def.~\ref{def:aut}).}
Let $\Delta_n=\{x\in \overline{L}\mid \exists w: \Pr\{\Pi(x,w,d)=1\}> \frac14\}$.
By definition, $D_n(\Delta_n)<\frac{1}{d}$.
For $x\in \{0,1\}^n \setminus \Delta_n$ and specific $w$,
Chernoff bounds imply that 
$\Pi(x,w,d)$ accepts in $0.4$ or more fraction of executions with
exponentially small probability, which remains much smaller
than $\frac14$ even after
multiplying by 1000.

\emph{Completeness (conditions~\ref{def:aut:comp} and~\ref{def:aut:diverge} in Def.~\ref{def:aut})} is guaranteed by the execution of the semi-decision procedure for $L$.

\emph{Simulation.}
For $x\in L$,
the probability that $A$ errs 1000 times is negligible (at most $2^{-1000}$).
Thus
with high probability at least one of 
the parallel executions of $A(x,d)$ outputs a correct $\Pi_d$-proof of
size bounded by a polynomial in the size of the shortest $\Pi_d$-proof of $x$. 
For $x\in L$ and (correct) $\Pi^{(d)}$-proof $w$,
Chernoff bounds imply that 
$\Pi(x,w,d)$ accepts in at least $0.4$  fraction of executions with
probability close to $1$.
Therefore, $t_{A_{\Pi}}(x,d)$
is bounded by  a polynomial in $|x|$, $d$,
and the size of the shortest $\Pi_d$-proof of $x$.%
\end{proof}

\section{Further research}

One possible direction is to show that automatizers are equivalent
to automatizable heuristic proof systems or, at least,
that there is an optimal automatizable heuristic proof system.
That may require some tweak in the definitions, because
the first obstacle to proving the latter fact is
the inability to check a candidate proof system for
the non-existence of a much shorter (correct) proof
than those output by a candidate automatizer.

Also Kraj{\'{\i}}{\v{c}}ek and Pudl{\'{a}}k \cite{KP}
and Messner \cite{Mes}
list equivalent conditions for the existence of (deterministic)
optimal and $p$-optimal proof systems. It seems promising
(and, in some places, challenging) to prove similar statements
in the heuristic setting.

\section*{Acknowledgements}

During the work on the subject, we discussed it with many people.
Our particular thanks go to (in the alphabetical order)
Dima Antipov, Dima Grigoriev, and Sasha Smal.

\small

\bibliographystyle{alpha}
\bibliography{autpps}

\newpage
\strut

\end{document}